\documentclass[letterpaper,10 pt, conference]{ieeeconf}

\usepackage[letterpaper,margin=0.75in]{geometry}
\IEEEoverridecommandlockouts                   
\pdfminorversion=4
\usepackage{graphics} 

\usepackage{amsmath,amssymb,amsfonts,amsthm} 
\usepackage{mathtools,bbm}
\usepackage[font={small}]{caption}
\usepackage{subcaption}
\usepackage{bbm}
\usepackage{xcolor}
\usepackage{bm}
\usepackage{bbold}
\usepackage{epsfig, amsbsy}
\usepackage{mathtools}
\usepackage{apptools}
\usepackage{graphicx}
\usepackage{caption}
\usepackage{subcaption}
\usepackage{epstopdf}

\usepackage{changepage}

\usepackage[colorlinks = true,
            linkcolor = blue,
            urlcolor  = blue,
            citecolor = blue,
            anchorcolor = blue]{hyperref}
\usepackage{cleveref}

\theoremstyle{plain}
\newtheorem{theorem}{Theorem}[section]
\newtheorem{corollary}{Corollary}[section]

\newtheorem{lemma}{Lemma}[section]

\AtAppendix{\counterwithin{theorem}{subsection}}
\AtAppendix{\counterwithin{lemma}{subsection}}

\def\bs{\boldsymbol}
\def\mbb{\mathbb}
\def\mcal{\mathcal}

\newcommand{\eq}[1]{{#1}^{\rm eq}}

\title{\LARGE \bf
Strategic investments in multi-stage General Lotto games
}

\author{Rahul Chandan$^{\star}$, Keith Paarporn$^{\star}$,  Mahnoosh Alizadeh,  Jason R. Marden 
\thanks{ R. Chandan, K. Paarporn, M. Alizadeh, and J. R. Marden are with the Department of Electrical and Computer Engineering at the University of California, Santa Barbara, CA. Contact: \texttt{ \{rchandan,kpaarporn,alizadeh,jrmarden\}@ucsb.edu} This work is supported by UCOP Grant LFR-18-548175, ONR grant \#N00014-20-1-2359, AFOSR grants \#FA9550-20-1-0054 and \#FA9550-21-1-0203, and the Army Research Lab through the ARL DCIST CRA \#W911NF-17-2-0181. $^\star$These authors contributed equally to this work.}
}

\begin{document}

\maketitle
\thispagestyle{empty}
\pagestyle{empty}

\begin{abstract}
    In adversarial interactions, one is often required to make strategic decisions over multiple periods of time, wherein decisions made earlier impact a player's competitive standing as well as how choices are made in later stages. In this paper, we study such scenarios in the context of General Lotto games, which models the competitive allocation of resources over multiple battlefields between two players. We propose a two-stage formulation where one of the players has reserved resources that can be strategically  \emph{pre-allocated} across the battlefields in the first stage. The pre-allocation then becomes binding and is revealed to the other player.  In the second stage, the players engage by simultaneously allocating their \emph{real-time} resources against each other. The main contribution in this paper provides complete characterizations of equilibrium payoffs in the two-stage game, revealing the interplay between performance and the amount of resources expended in each stage of the game. We find that real-time resources are at least twice as effective as pre-allocated resources. We then determine the player's optimal investment when there are linear costs associated with purchasing each type of resource before play begins, and there is a limited monetary budget.
    
\end{abstract}

\section{Introduction}\label{sec:intro}


In resource allocation problems, system planners must make investment decisions to mitigate the risks posed by disturbances or strategic interference.  In many practical settings, these investments are made over several stages leading up to the actual time of allocation.  For example, security measures in cyber-physical systems are deployed over long periods of time. As such, attackers can use knowledge of pre-deployed elements to identify vulnerabilities and exploits in the defender's strategy \cite{brown2006defending,zhang2013protecting}.  As another example, power grid operators must bid on forward-capacity (i.e., day-ahead, hour-ahead and real-time) markets to fulfill future demand.  Although grid operators can significantly reduce energy prices and carbon emissions by procuring capacity in day- and hour-ahead markets, they still rely on real-time markets to account for uncertainty in the demand signal \cite{ben2004adjustable,li2020optimal}. Further examples include R\&D contests, team management in competitive sports, and political lobbying \cite{yildirim2005contests}.


Indeed, there are numerous real-world examples of systems in which both early and late investments contribute to the system performance.  Notably, many of these scenarios consist of strategic interactions between competitors, and exhibit trade-offs when investing in pre-allocated and real-time resources (e.g., resource costs vs. flexibility in deployment, long-term vs. short-term gains).  In such scenarios, system planners must choose their dynamic investments while accounting for their competitors' decision making, and balancing the trade-offs in early and late investment. 


In this manuscript, we seek to characterize the interplay between early and late investment in competitive resource allocation settings. We pursue this research agenda in the context of General Lotto games, a game-theoretic framework that explicitly describes the competitive allocation of resources between opponents. The General Lotto game is a popular variant of the classic Colonel Blotto game, wherein two budget-constrained players, $A$ and $B$, compete over a set of valuable battlefields. The player that deploys more resources to a battlefield wins its associated value, and the objective for each player is to win as much value as possible. Outcomes in the standard formulations are determined by a single, simultaneous allocation of resources. In the novel formulation introduced in this paper, one of the players can strategically decide how to deploy resources before the actual engagement takes place. The placement of the pre-allocated resources thus has an effect on how the allocation decisions are made at the time of competition.


Specifically, we consider the following two-stage scenario. Player $A$ is endowed with $P \geq 0$ resources to be pre-allocated, and both players possess real-time resources $R_A, R_B \geq 0$ to be allocated at the time of competition. In the first stage, player $A$ decides how to \emph{deterministically} deploy the pre-allocated resources $P$ over the battlefields. Player $A$'s endowments and pre-allocation decision then become known to player $B$. In the second stage, both players engage in a General Lotto game where they simultaneously decide how to deploy their real-time resources, and payoffs are subsequently derived. We assume player $B$ does not have any pre-allocated resources at its disposal, and only has real-time resources to compete with. Each player can \emph{randomize} the deployment of her real-time resources. Here, player $B$ must overcome both the pre-allocated and real-time resources deployed by player $A$ to secure a battlefield.  A full summary of our contributions is provided below:
%




\smallskip \noindent \textbf{Our Contributions:} Our main contribution in this paper is a full characterization of equilibrium payoffs to both players in our two-stage General Lotto game, given player $A$ has $P$ pre-allocated resources, $R_A$ real-time resources, and player $B$ has $R_B$ real-time resources (Theorem \ref{thm:equilibrium_characterization}). This result also specifies how player $A$ should optimally deploy its pre-allocated resources to the battlefields, each of which has an arbitrary associated value. Our characterization explicitly reveals the relative effectiveness of pre-allocated and real-time resources -- for any desired performance level $\Pi \geq 0$ against $R_B$, we provide the set of all pairs $(P,R_A)$ that achieve the payoff $\Pi$ for player $A$ (Theorem \ref{thm:level_set}). As a consequence, we show that, to achieve the same performance $\Pi > 0$ using only one type of resource, player $A$ needs at least double the amount of pre-allocated resources than the amount of real-time resources (Corollary \ref{cor:twice_as_effective}).

Leveraging the main results, we then derive the optimal investment pair $(P,R_A)$ for player $A$ when there are linear per-unit costs to invest in both types of resources and a limited monetary budget is available. We note that it is optimal to invest in both resources only if the per-unit cost of pre-allocated resources is lower than real-time resources. Indeed, pre-allocated resources are less effective than real-time resources, since their deployment is not randomized and player $B$ has knowledge of their placement. 


\smallskip \noindent \textbf{Related works:} This manuscript takes preliminary steps towards developing analytical insights about competitive resource allocation in multi-stage scenarios. There is widespread interest in this research objective, where the focus ranges from zero-sum games \cite{Nayyar_2013,Kartik_2021asymmetric,Li_2020}, and dynamic games \cite{Isaacs_1965,VonMoll_2020}, to Colonel Blotto games \cite{Vu_2019combinatorial,Aidt_2019,Shishika_2021,Leon_2021}. The goal of many of these works is to develop computational tools to compute decision-making policies for agents in adversarial and uncertain environments. In comparison, our work provides explicit, analytical characterizations of equilibrium strategies, allowing for insights relating the players' performance with various elements of adversarial interaction to be drawn. As such, our work is related to a recent research thread studying Colonel Blotto games in which allocation decisions are made over multiple stages \cite{Kovenock_2012,Gupta_2014a,Gupta_2014b,Paarporn2021strategically,Chandan_2021_concession}. 

Our work also draws significantly from the primary literature on Colonel Blotto and General Lotto games \cite{Gross_1950,Roberson_2006,Kovenock_2020,Vu_EC2021}. In particular, the simultaneous-move subgame played in the second stage of our model was first proposed by Vu and Loiseau \cite{Vu_EC2021}, and is known as the \emph{General Lotto game with favoritism} (GL-F). Favoritism refers to the fact that pre-allocated resources provide an inherent advantage to one player's competitive chances. Their work establishes existence of equilibria and develops computational methods to calculate them to arbitrary precision. However, this prior work considers pre-allocated resources as exogenous parameters of the game. In contrast, we model the deployment of pre-allocated resources as a strategic element of the competitive interaction.  Furthermore, we provide the first analytical characterization of equilibria and the corresponding payoffs in GL-F games. 




\section{Problem formulation}\label{sec:model}

\begin{figure*}[t]
    \centering
    \includegraphics[width=\textwidth,trim={0 6.75cm 0cm 0}]{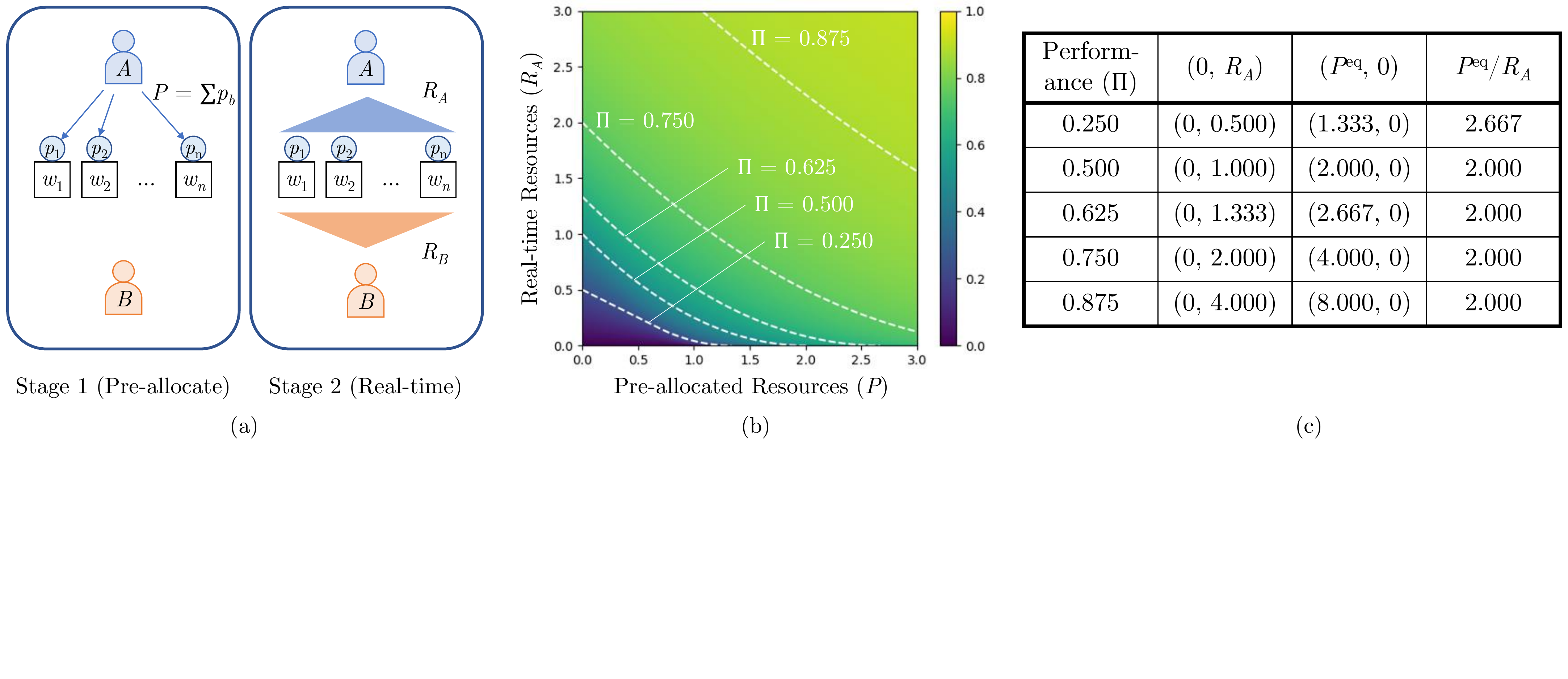}
    \caption{(a) The two-stage General Lotto game under consideration. Players $A$ and $B$ compete over $n$ battlefields, whose valuations are given by $\{w_b\}_{b=1}^n$. In Stage 1, player $A$ decides how to deploy $P$ pre-allocated resources to the battlefields. Player $B$ observes the deployment. In Stage 2, the players simultaneously decide how to deploy their real-time resources $R_A$ and $R_B$, thus engaging in a General Lotto game with favoritism. 
    (b) A contour map of player $A$'s equilibrium payoff in the Stage 2 game under the optimal deployment of her pre-allocated resources $S$ in Stage 1. The dashed lines indicate level curves, i.e. the set of resource pairs $(P,R_A)\in\mbb{R}^2_+$ that achieve a given desired performance level $\Pi > 0$ (Theorem \ref{thm:level_set}). Here, we have normalized the battlefield values and player $B$'s budget such that $\sum_{b=1}^n w_b = 1$ and $qR_B = 1$. 
    (c) This table shows the relative effectiveness of pre-allocated to real-time resources, $P^\text{eq}/R_A$. Here, $P^\text{eq}$ is defined as the endowment $(P^\text{eq},0)$ (i.e. without real-time resources) that achieves the same performance $\Pi$ as the endowment $(0,R_A)$ (i.e. without pre-allocated resources) for a given $R_A$.  We find that real-time resources are at least twice as effective as pre-allocated resources, and can be arbitrarily more effective in certain settings (Corollary \ref{cor:twice_as_effective}). }
    \label{fig:main_figure}
\end{figure*}

The \emph{General Lotto game with pre-allocations} (GL-P) is a two-stage game with players $A$ and $B$, who compete over a set of $n$ battlefields, denoted as $\mcal{B}$. Each battlefield $b\in\mcal{B}$ is associated with a known valuation $w_b > 0$, which is common to both players. Player $A$ is endowed with a pre-allocated resource budget $P > 0$ and a real-time resource budget $R_A>0$. Player $B$ is endowed with a real-time resource budget $R_B > 0$, but no pre-allocated resources.\footnote{Recent computational advances (see, e.g., \cite{Vu_EC2021}) permit the study of the scenario where both players are endowed with pre-allocated resources.  In this work, we seek to provide \emph{analytical} characterizations of equilibrium payoffs, and, thus, consider the simpler, unilateral pre-allocation setting.}  The two stages are played as follows:

\smallskip \noindent -- \emph{Stage 1:} Player $A$ decides how to allocate her $P$ pre-allocated resources to the battlefields, i.e., it selects a vector $\bs{p} = (p_1,\ldots,p_n) \in \Delta_n(P) := \{\bs{p}' \in \mbb{R}_+^n : \|\bs{p}'\|_1 = P \}$. We term the vector $\bs{p}$ as player $A$'s \emph{pre-allocation profile}. No payoffs are derived in Stage 1, and $A$'s choice $\bs{p}$ becomes binding and common knowledge.

\smallskip \noindent -- \emph{Stage 2:} Players $A$ and $B$ compete in a simultaneous-move sub-game $G$ over $\mcal{B}$ with their real-time resource budgets $R_A$, $R_B$. Here, both players can randomly allocate these resources as long as their expenditure does not exceed their budgets in expectation. Specifically, a strategy for player $i \in \{A,B\}$ is an $n$-variate (cumulative) distribution $F_i$ over allocations $\bs{x}_i \in \mbb{R}_+^n$ that satisfies
\begin{equation}\label{eq:lotto_budget_constraint}
    \mbb{E}_{\bs{x}_i \sim F_i} \left[ \sum_{b\in\mcal{B}} x_{i,b} \right] \leq R_i.
\end{equation}

\noindent We use $\mcal{L}(R_i)$ to denote the set of all strategies $F_i$ that satisfy \eqref{eq:lotto_budget_constraint}.  Given that player $A$ chose $\bs{p}$ in Stage 1, the expected payoff to player $A$ is given by
\begin{equation}
    u_A(\bs{p},F_A,F_B) := \mbb{E}_{\substack{\bs{x}_A \sim F_A, \\ \bs{x}_B \sim F_B}}\left[ \sum_{b\in\mcal{B}} w_b \cdot I(x_{A,b} + p_b, qx_{B,b}) \right]
\end{equation}
where $I(a,b)=1$ if $a>b$, and $I(a,b)=0$ otherwise for any two numbers $a,b \in \mbb{R}_+$.\footnote{The tie-breaking rule (i.e., deciding who wins if $x_{A,b}+p_b=x_{B,b}$) can be assumed to be arbitrary, without affecting any of our results. This property is common in the General Lotto literature, see, e.g., \cite{Kovenock_2020,Vu_EC2021}.}  In words, player $B$ must overcome player $A$'s pre-allocated resources $p_b$ as well as player $A$'s allocation of real-time resources $x_{A,b}$ in order to win battlefield $b$.  The parameter $q > 0$ is the relative quality of player $B$'s real-time resources against player $A$'s resources. When $q < 1$ (resp. $q > 1$), they are less (resp. more) effective than player $A$'s resources. The payoff to player $B$ is $u_B(\bs{p},F_A,F_B) = W - u_A(\bs{p},F_A,F_B)$, where we denote $W=\sum_{b\in\mcal{B}} w_b$.

\smallskip Stages 1 and 2 of GL-P are illustrated in \Cref{fig:main_figure}a.  We denote an instance of GL-P as $\text{GL-P}(P,R_A,R_B,\bs{w})$, and note that the Stage 2 sub-game (i.e., the game with fixed pre-allocation profile) is an instance of the \emph{General Lotto game with favoritism} \cite{Vu_EC2021}.
For a given GL-P instance $G$, we define an equilibrium as any joint strategy profile $(\bs{p}^*,F_A^*,F_B^*) \in \Delta_n(P)\times\mcal{L}(R_A)\times\mcal{L}(R_B)$ that satisfies
\begin{equation}
    \begin{aligned}
        u_A(\bs{p}^*,F_A^*,F_B^*) &\geq u_A(\bs{p},F_A,F_B^*) \text{ and} \\ 
        u_B(\bs{p}^*,F_A^*,F_B^*) &\geq u_B(\bs{p}^*,F_A^*,F_B)
    \end{aligned}
\end{equation}
for any $\bs{p}\in\Delta_n(P)$, $F_A \in \mcal{L}(R_A)$ and $F_B\in\mcal{L}(R_B)$.  Notably, player $A$'s strategy consists of her deterministic pre-allocation profile $\bs{p}$ in Stage 1, as well as her randomized allocation of real-time resources $F_A$ in Stage 2.  It follows from the results in \cite{Vu_EC2021} that an equilibrium exists in any GL-P instance $G$, and that the equilibrium payoffs $\pi^*_i(G)=u_i(\bs{p}^*,F_A^*,F_B^*)$, $i\in\{A,B\}$, are unique.
For ease of notation, we will use $\pi^*_i(P,R_A,R_B)$, $i\in\{A,B\}$, to denote players' equilibrium payoffs in $G$ when the dependence on the vector $\bs{w}$ is clear.

\section{Main results}\label{sec:results}

In this section, we present our main result: the characterization of players' equilibrium payoffs in the GL-P game. 
We then use this result to derive an expression for the level sets of the function $\pi^*_A(P,R_A,R_B)$ in $(P,R_A)\in\mbb{R}_{\geq 0}\times\mbb{R}_{\geq 0}$, and to compare the relative effectiveness of pre-allocated and real-time resources.

The result below 
provides an explicit characterization of player $A$'s equilibrium payoff $\pi^*_A(P,R_A,R_B)$.  Note that player $B$'s equilibrium payoff is simply $\pi^*_B(P,R_A,R_B) = W - \pi^*_A(P,R_A,R_B)$.

\begin{theorem}\label{thm:equilibrium_characterization}
    Consider a GL-P game instance with $P, R_A, R_B > 0$, and $\bs{w}\in\mbb{R}^n_{++}$.  The following conditions characterize player $A$'s equilibrium payoff $\pi^*_A(P,R_A,R_B)$:
    
    \begin{enumerate}
        \item If $qR_B < P$, or $qR_B \geq P$ and $R_A \geq \frac{2(qR_B-P)^2}{P+2(qR_B-P)}$, then $\pi^*_A(P,R_A,R_B)$ is
        \begin{equation} \label{eq:allcase1_payoff}
            W\cdot\left(1 - \frac{qR_B}{2R_A}\left(\frac{R_A + \sqrt{R_A(R_A+2P)}}{P+R_A+\sqrt{R_A(R_A+2P)}} \right)^2 \right).
        \end{equation}
        \item Otherwise, $\pi^*_A(P,R_A,R_B)$ is
        \begin{equation} \label{eq:allcase2_payoff}
            W\cdot\frac{R_A}{2(qR_B-P)}.
        \end{equation}
    \end{enumerate}
\end{theorem}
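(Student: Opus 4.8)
The plan has four steps: normalize away $q$; reduce Stage~1 to the value-proportional pre-allocation so that Stage~2 collapses to a single battlefield; solve that single-battlefield General Lotto game with favoritism in closed form; and verify that no Stage-1 deviation helps. For the normalization, note that $q$ enters $u_A$ only through the products $q x_{B,b}$, so the substitution $y_{B,b}=q x_{B,b}$ turns \eqref{eq:lotto_budget_constraint} into $\mbb{E}[\sum_b y_{B,b}]\le qR_B=:\widehat R_B$ and the win condition into $x_{A,b}+p_b>y_{B,b}$; hence it suffices to analyze $q=1$ with player $B$'s budget $\widehat R_B$, which is why $R_B$ appears only through $qR_B$ in \eqref{eq:allcase1_payoff}--\eqref{eq:allcase2_payoff}.

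For the reduction, I would propose the pre-allocation $p_b^\star=(w_b/W)P$ together with Stage-2 marginals that are $(w_b/W)$-scaled copies of the one-battlefield solution below. Because under $\bs p^\star$ every battlefield is a rescaled image of the aggregate, the Stage-2 subgame at $\bs p^\star$ is strategically equivalent to a single-battlefield game with value $W$, head start $P$, and budgets $R_A,\widehat R_B$ (the standard ``homogenization'' for General Lotto). Optimality of $\bs p^\star$ is deferred to the last step.

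The core is the single-battlefield game. Writing player $A$'s effective bid as $\widetilde x_A=x_A+P$ --- a distribution on $[P,\infty)$ with mean at most $R_A+P$ --- turns it into a Lotto contest between a budget-$(R_A+P)$ player constrained to bid at least $P$ and a budget-$\widehat R_B$ player. In the ``favorite'' regime, I would guess-and-verify the equilibrium in which $B$ concedes with probability $\beta_0$ (an atom at $0$) and otherwise plays a uniform density on $(P,c]$, while $A$ withholds its real-time resources with probability $\alpha_0$ (an atom at $\widetilde x_A=P$) and otherwise plays a uniform density on $(P,c]$; the Lagrangian (budget-price) indifference conditions together with the two budget equations give $\alpha_0=P/c$ and $(c-P)^2=2R_Ac$, hence $c=P+R_A+\sqrt{R_A(R_A+2P)}$ and player $A$'s value $W\big(1-\widehat R_B/c\big)=W\big(1-2qR_B/(\sqrt{R_A}+\sqrt{R_A+2P})^2\big)$, which is exactly \eqref{eq:allcase1_payoff} after algebra. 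This construction is feasible precisely when $\beta_0\ge 0$, i.e.\ $c+P\ge 2\widehat R_B$, which rearranges to the conditions $\widehat R_B<P$, or $\widehat R_B\ge P$ with $R_A\ge 2(\widehat R_B-P)^2/(P+2(\widehat R_B-P))$ --- exactly part~1 of the statement. In the complementary regime, $B$ no longer concedes ($\beta_0=0$, support $(P,2\widehat R_B-P]$) and $A$ concedes with probability $\alpha_0=1-R_A/(\widehat R_B-P)$, yielding the classical underdog value $WR_A/\big(2(qR_B-P)\big)$, i.e.\ \eqref{eq:allcase2_payoff}; the feasibility condition here (a budget-price inequality) reproduces exactly the remaining inequality $R_A<2(qR_B-P)^2/(P+2(qR_B-P))$. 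Matching the two values at the threshold confirms the dichotomy is exhaustive and $\pi_A^\star$ continuous, and uniqueness of the equilibrium payoffs comes from \cite{Vu_EC2021}.

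Finally, to rule out Stage-1 deviations I would fix the constructed $F_B^\star$ and bound $\max_{F_A\in\mcal{L}(R_A)}u_A(\bs p,F_A,F_B^\star)$ over all $\bs p\in\Delta_n(P)$. Since $u_A$ depends only on the players' battlefield-wise marginals, $A$'s best response reduces to choosing, for each battlefield $b$, the pre-allocated mass $p_b$ and a real-time marginal of mean $r_b$, subject to $\|\bs p\|_1=P$ and $\sum_b r_b\le R_A$. Because $F_B^\star$'s marginals have piecewise-affine CDFs (zero below the contested interval, affine on it, one above), each battlefield's attainable value is a tractable function of $(p_b,r_b)$ --- concave in the favorite regime, and handled via its concave envelope in the underdog regime --- and optimizing shows the maximum equals $\pi_A^\star$ and is attained at $\bs p^\star$. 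I expect the main obstacle to be the single-battlefield case analysis: the constraint $\widetilde x_A\ge P$ forces the atom at $P$, couples the budget equations nonlinearly (the source of the $\sqrt{R_A(R_A+2P)}$ term and of the nonstandard threshold), and one must carefully track which player concedes and where the atoms sit as $(P,R_A,\widehat R_B)$ range over the two regimes; once the one-battlefield equilibrium is pinned down, the homogenization and the deviation bound are comparatively routine.
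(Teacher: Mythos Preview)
Your Steps 1--3 are sound and, in fact, give a cleaner route to the closed forms \eqref{eq:allcase1_payoff}--\eqref{eq:allcase2_payoff} than the paper's: instead of solving the Vu--Loiseau $(\kappa_A,\kappa_B)$ system at $\bs p^\star$ and simplifying, you construct the one-battlefield equilibrium explicitly (atom plus uniform on $(P,c]$, with $c=P+R_A+\sqrt{R_A(R_A+2P)}$), and your threshold condition $c+P\ge 2qR_B$ does rearrange to the stated dichotomy. So for the ``evaluation at $\bs p^\star$'' part, your argument is a genuine alternative to the paper's $\kappa$-calculus.

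The gap is Step~4. Fixing the homogenized $F_B^\star$ and bounding $\max_{\bs p,F_A}u_A(\bs p,F_A,F_B^\star)$ does \emph{not} return $\pi_A^\star$ in the underdog regime. Take $n=2$, $w_1=w_2=1$, $P=1$, $qR_B=3$, $R_A=1$ (so we are in case~2, with $\pi_A^\star=R_A/(qR_B-P)=1/2$). Under your $F_B^\star$, each $y_{B,b}$ is uniform on $(\tfrac12,\tfrac52]$. If $A$ deviates to $\bs p=(1,0)$ and bids $x_{A,1}=R_A$ deterministically (so $z_1=2$) while abandoning battlefield~2, her payoff against $F_B^\star$ is $(z_1-\tfrac12)/2=3/4>\pi_A^\star$. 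The mechanism is general: because the marginals of your $F_B^\star$ have support starting at $(w_b/W)P$, concentrating all of $P$ on a single battlefield lets $A$ leapfrog a positive fraction of $B$'s mass there for free, and the per-battlefield win probability against $F_B^\star$ is not jointly concave in $(p_b,r_b)$ in the way your sketch assumes. So the ``best response against fixed $F_B^\star$'' route cannot certify optimality of $\bs p^\star$.

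This is exactly why the paper does \emph{not} argue via a fixed $F_B^\star$. It instead proves the Stackelberg/SPE statement that $\bs p^\star$ maximizes the Stage-2 \emph{value} $\pi_A(\bs p,R_A,R_B)$ over $\bs p\in\Delta_n(P)$, where $B$ re-optimizes for each $\bs p$. Concretely, the paper parameterizes $\pi_A(\bs p,\cdot)$ through the Vu--Loiseau system \eqref{eq:SOE}, shows $\bs p^\star$ is a local maximizer in each of the two symmetric regimes (Lemma~\ref{lem:local_maximizer}), and then rules out interior critical points in the mixed regime $\mcal B_1,\mcal B_2\neq\emptyset$ by computing the partial derivatives \eqref{eq:mixed_partials} and showing the only admissible critical point lies on the boundary where all battlefields fall into Case~2 (with the same payoff). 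If you want to keep your constructive Steps~1--3, you still need to replace Step~4 by an argument that compares $\bs p$ to $\bs p^\star$ at the level of the subgame \emph{value}, not against a single fixed $F_B^\star$; the paper's directional-derivative analysis (showing $(\bs p-\bs p^\star)^\top\nabla\pi_A\ge 0$ in the Case-1 region and $\pi_A$ constant in the Case-2 region) is one way to do this.
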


\noindent The derivation of the above result is challenging because explicit expressions for the players' payoffs in the Stage 2 sub-game are generally not attainable for arbitrary $\bs{p} \in \Delta_n(P)$. Moreover, these payoffs are not generally concave. Our approach is to show that for any $\bs{p} \neq \bs{p}^*$, the payoff is nondecreasing in the direction pointing to $\bs{p}^*$. The full proof is given in Appendix \ref{sec:thm_proof}, and relies on methods developed in \cite{Vu_EC2021}. These details are given in Appendix \ref{sec:Vu_method}. 

As a consequence of our main result, we are able to characterize expressions for the level curves of the function $\pi_A^*(P,R_A,R_B)$. That is, for a desired performance level $\Pi \geq 0$ and fixed $R_B$, we provide the set of all pairs $(P,R_A)$ such that $\pi^*_A(P,R_A,R_B) = \Pi$.


\begin{theorem}\label{thm:level_set}
    Given any $R_B>0$ and $\bs{w}\in\mbb{R}^n_{++}$, fix a desired performance level $\Pi \in [0,W]$.  The set of all pairs $(P,R_A) \in \mbb{R}_+^2$ that satisfy $\pi^*_A(P,R_A,R_B) = \Pi$ is given by the following conditions:  
    
    \smallskip\noindent If $0\leq \Pi < \frac{W}{2}$, then
    \begin{equation} \label{eq:level_curve_1and2}
        R_A = \!\!\begin{cases}
            \frac{2\Pi}{W} (qR_B-P) &\, \text{if } P \in \Big[ 0, \frac{(W-2\Pi)qR_B}{W-\Pi} \Big) \\
            \!\!\frac{(qR_BW-(W-\Pi)P)^2}{2qR_B(W-\Pi)W} &\, \text{if } P \in \Big[\frac{(W-2\Pi)qR_B}{W-\Pi},\frac{WqR_B}{W-\Pi}\Big]
        \end{cases}
    \end{equation}
    
    \smallskip\noindent If $\frac{W}{2} \leq \Pi \leq W$, then
    \begin{equation}
        R_A = \frac{(qR_BW-(W-\Pi)P)^2}{2qR_B(W-\Pi)W}, \text{ if } P \in \Big[0,\frac{WqR_B}{W-\Pi}\Big]
    \end{equation}
    If $P> \frac{W qR_B}{W-\Pi}$, then $\pi^*_A(P,R_A,R_B)>\Pi$ for any $R_A\geq0$.
\end{theorem}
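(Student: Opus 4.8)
\emph{Proof proposal.} The plan is to invert the two closed-form branches of \Cref{thm:equilibrium_characterization}, holding $\Pi$ fixed and solving for $R_A$ as a function of $P$ in each regime. The crucial simplification is the substitution $t := R_A + \sqrt{R_A(R_A+2P)}$. Squaring the identity $t - R_A = \sqrt{R_A(R_A+2P)}$ gives $t^2 = 2R_A(t+P)$, so $R_A = \tfrac{t^2}{2(t+P)}$, while by inspection the bracketed fraction in \eqref{eq:allcase1_payoff} equals $\tfrac{t}{P+t}$. Substituting both into \eqref{eq:allcase1_payoff} collapses it to
\begin{equation}\label{eq:simplified_payoff_plan}
    \pi^*_A(P,R_A,R_B) = W\Bigl(1 - \frac{qR_B}{P+t}\Bigr),
\end{equation}
valid wherever the conditions of case~(1) hold. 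Since $R_A \mapsto t$ is a continuous, strictly increasing bijection of $[0,\infty)$ onto itself, \eqref{eq:simplified_payoff_plan} shows $\pi^*_A$ is continuous and strictly increasing in $R_A$ on the portion of the domain lying in regime~(1).

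Armed with \eqref{eq:simplified_payoff_plan}, solving $\pi^*_A = \Pi$ in regime~(1) amounts to $P + t = \tfrac{WqR_B}{W-\Pi}$, i.e.\ $t = \tfrac{WqR_B-(W-\Pi)P}{W-\Pi}$, which is admissible ($t \geq 0$) exactly when $P \leq \tfrac{WqR_B}{W-\Pi}$. Plugging this back into $R_A = \tfrac{t^2}{2(t+P)}$ yields $R_A = \tfrac{(qR_BW-(W-\Pi)P)^2}{2qR_B(W-\Pi)W}$, the nonlinear branch in the statement. When $P > \tfrac{WqR_B}{W-\Pi}$ we have $P + t \geq P > \tfrac{WqR_B}{W-\Pi}$ for every $R_A \geq 0$, so \eqref{eq:simplified_payoff_plan} gives $\pi^*_A > \Pi$ regardless of $R_A$, which is the last assertion of the theorem. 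In regime~(2), $\pi^*_A = \tfrac{WR_A}{2(qR_B-P)}$ is linear and increasing in $R_A$, and setting it equal to $\Pi$ gives $R_A = \tfrac{2\Pi}{W}(qR_B-P)$; substituting this into the inequality $R_A < \tfrac{2(qR_B-P)^2}{P+2(qR_B-P)}$ that defines regime~(2) and simplifying shows the resulting pair lies in regime~(2) precisely when $P < \tfrac{(W-2\Pi)qR_B}{W-\Pi}$, a nonempty range of $P$ only when $\Pi < \tfrac{W}{2}$. This is the linear branch.

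Finally I would stitch the branches together. For fixed $P$, the threshold $R_A = \tfrac{2(qR_B-P)^2}{P+2(qR_B-P)}$ separating regimes~(2) and~(1) corresponds to $t = 2(qR_B-P)$, and both payoff formulas evaluate there to $W\tfrac{qR_B-P}{2qR_B-P}$; hence $R_A \mapsto \pi^*_A(P,R_A,R_B)$ is continuous and strictly increasing on all of $[0,\infty)$, with value $0$ at $R_A=0$ when $P \le qR_B$, value $W\bigl(1-\tfrac{qR_B}{P}\bigr)$ at $R_A = 0$ when $P > qR_B$, and limit $W$ as $R_A\to\infty$. Thus for each $P$ there is at most one $R_A$ with $\pi^*_A = \Pi$, given by the regime-(2) expression on the $P$-range found above and by the regime-(1) expression on the complementary range up to $\tfrac{WqR_B}{W-\Pi}$; collecting these over $P$ reproduces \eqref{eq:level_curve_1and2} and the $\Pi \ge W/2$ display. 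I expect the genuine work to be the consistency bookkeeping --- confirming each inverted branch lands in the regime whose formula was used, and that the cut point $P = \tfrac{(W-2\Pi)qR_B}{W-\Pi}$ is exactly where the linear and nonlinear branches meet --- rather than any single delicate estimate.
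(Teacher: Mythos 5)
Your proposal is correct and follows the same route the paper intends for this result, namely inverting the two payoff branches of \Cref{thm:equilibrium_characterization} for fixed $\Pi$ and fixed $P$; the substitution $t = R_A+\sqrt{R_A(R_A+2P)}$, which collapses \eqref{eq:allcase1_payoff} to $W(1-\tfrac{qR_B}{P+t})$, is exactly the right simplification, and your computed cut point $P=\tfrac{(W-2\Pi)qR_B}{W-\Pi}$ and terminal point $P=\tfrac{WqR_B}{W-\Pi}$ match the statement. The "bookkeeping" you defer is in fact already closed by your own argument: continuity and strict monotonicity of $R_A\mapsto\pi^*_A(P,R_A,R_B)$ across the regime boundary give uniqueness of the level-$\Pi$ solution for each $P$, so the explicit regime-2 consistency check ($P<\tfrac{(W-2\Pi)qR_B}{W-\Pi}$) forces the regime-1 formula on the complementary range.
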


\noindent We plot the surface $\pi^*_A(P,R_A,R_B)$ for $(P,R_A) \in \mbb{R}_+^2$ as well as the level curves corresponding to $\Pi\in\{0.250,0.500,0.625,0.750,0.875\}$ in Figure \ref{fig:main_figure}b. Notably, for any $\Pi \in (0,W)$, the level curve $R_A^\Pi(P)$ is strictly decreasing and convex in $P \in [0,\frac{qR_BW}{W-\Pi}]$, where we use $R_A^\Pi(P)$ to explicitly note the dependence on $\Pi$. Hence, the function $\pi_A(P,R_A,R_B)$ is quasi-concave in $(P,R_A)$.



We can use the result in \Cref{thm:level_set} to obtain an expression for the relative effectiveness of pre-allocated and real-time resources when these are deployed in isolation.  In the following corollary, we provide this expression, and observe that real-time resources are at least twice as valuable as pre-allocated resources, and can be arbitrarily more valuable in specific settings:

\begin{corollary} \label{cor:twice_as_effective}
    For given $R_A, R_B>0$, the unique value $\eq P>0$ such that $\pi^*_A(\eq P,0,R_B)=\pi^*_A(0,R_A,R_B)$ is characterized by the following expression:
    \begin{equation}
        \eq P =
        \begin{cases}
            2R_A & \quad \text{if } R_A > qR_B, \\
            \frac{2(qR_B)^2}{2qR_B-R_A} & \quad \text{if } R_A \leq qR_B.
        \end{cases}
    \end{equation}
    Notably, the ratio $\eq P/R_A$ is lower-bounded by $2$, and $\eq P/R_A\to \infty$ as $R_A\to 0^+$.
\end{corollary}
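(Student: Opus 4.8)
The plan is to reduce the statement entirely to \Cref{thm:equilibrium_characterization} and \Cref{thm:level_set}. First I would evaluate $\Pi := \pi^*_A(0,R_A,R_B)$. Although \Cref{thm:equilibrium_characterization} is stated for $P>0$, the equilibrium payoff is continuous in $P$ (equivalently, the Stage~2 subgame with $\bs{p}=\bs{0}$ is the ordinary General Lotto game without favoritism), so letting $P\to 0^+$ in \eqref{eq:allcase1_payoff}--\eqref{eq:allcase2_payoff} and using $\sqrt{R_A(R_A+2P)}\to R_A$ gives
\[
  \Pi = \begin{cases} W\!\left(1 - \dfrac{qR_B}{2R_A}\right), & R_A > qR_B,\\[2.2ex] W\,\dfrac{R_A}{2qR_B}, & R_A \le qR_B,\end{cases}
\]
the case split being exactly the one induced by the threshold $R_A \ge \tfrac{2(qR_B-P)^2}{P+2(qR_B-P)}$ evaluated at $P=0$. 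Note $\Pi\ge W/2$ in the first case, $\Pi\le W/2$ in the second, and $\Pi\in(0,W)$ throughout since $R_A,R_B>0$.

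Second I would apply \Cref{thm:level_set} with this performance level and solve for the point of the level curve lying on the axis $R_A=0$. In the regime $\Pi \ge W/2$ the relation $R_A = \tfrac{(qR_BW-(W-\Pi)P)^2}{2qR_B(W-\Pi)W}$ vanishes only at $P=\tfrac{WqR_B}{W-\Pi}$, the right endpoint of its domain. In the regime $\Pi<W/2$ the first branch $R_A=\tfrac{2\Pi}{W}(qR_B-P)$ would vanish at $P=qR_B$, but $qR_B$ lies strictly outside its domain $\big[0,\tfrac{(W-2\Pi)qR_B}{W-\Pi}\big)$ because $\tfrac{W-2\Pi}{W-\Pi}<1$ whenever $\Pi>0$; hence on the whole level curve the only point with $R_A=0$ is again $P=\tfrac{WqR_B}{W-\Pi}$, the right endpoint of the second branch's domain. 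The final clause of \Cref{thm:level_set} ($\pi^*_A>\Pi$ for all $R_A\ge 0$ once $P>\tfrac{WqR_B}{W-\Pi}$) shows no other $P$ with $R_A=0$ achieves value $\Pi$, so $\eq P$ is unique and equals $\eq P = \tfrac{WqR_B}{W-\Pi}$.

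Third, I would substitute the two expressions for $\Pi$ into $\eq P = \tfrac{WqR_B}{W-\Pi}$. When $R_A>qR_B$, $W-\Pi = \tfrac{WqR_B}{2R_A}$, giving $\eq P = 2R_A$; when $R_A\le qR_B$, $W-\Pi = W\tfrac{2qR_B-R_A}{2qR_B}$, giving $\eq P = \tfrac{2(qR_B)^2}{2qR_B-R_A}$. Finally, for the ratio: in the first case $\eq P/R_A = 2$; in the second, $\eq P/R_A = \tfrac{2(qR_B)^2}{R_A(2qR_B-R_A)} \ge 2$ is equivalent to $(qR_B)^2 \ge R_A(2qR_B-R_A)$, i.e.\ to $(qR_B-R_A)^2\ge 0$, which always holds (equality iff $R_A=qR_B$); and since the numerator is a fixed positive constant while $R_A(2qR_B-R_A)\to 0^+$ as $R_A\to 0^+$, we obtain $\eq P/R_A\to\infty$.

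The only genuinely delicate point -- and the step I would be most careful with -- is the boundary behaviour at $P=0$ and at $R_A=0$, since \Cref{thm:equilibrium_characterization} is stated for strictly positive endowments: one must justify the limit $P\to 0^+$ (immediate from the explicit formulas) and must correctly identify which branch of the level curve in \Cref{thm:level_set} actually attains $R_A=0$. The remainder is routine algebra.
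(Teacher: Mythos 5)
Your proposal is correct and follows essentially the same route the paper intends: the paper derives Corollary \ref{cor:twice_as_effective} directly from the level-set characterization of Theorem \ref{thm:level_set}, with $\Pi=\pi^*_A(0,R_A,R_B)$ obtained from Theorem \ref{thm:equilibrium_characterization} at $P=0$ (the standard General Lotto payoff), exactly as you do. Your handling of the boundary cases ($P\to 0^+$ and identifying that only the second branch of the level curve reaches $R_A=0$) and the final algebra, including the $(qR_B-R_A)^2\geq 0$ argument for the factor-of-two bound, are all sound.
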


\noindent The table in \Cref{fig:main_figure}c compares the relative effectiveness of pre-allocated and real-time resources corresponding with the performance levels considered in \Cref{fig:main_figure}b.



\section{Optimal investment decisions}

In this section, we consider a scenario where player $A$ has an opportunity to make an investment decision regarding its resource endowments. That is, the pair $(P,R_A)\in\mbb{R}^2_+$ is a strategic choice made by $A$ before the game $\text{GL-P}(P,R_A,R_B,\bs{w})$ is played. Given a monetary budget $X_A > 0$ for player $A$, any pair $(P,R_A)$ must belong to the following set of feasible investments:
\begin{equation}\label{eq:linear_cost_constraint}
        \mcal{I}(X_A) := \{(P,R_A) : R_A + cP \leq X_A\}
\end{equation}
where $c\geq 0$ is the per-unit cost for purchasing pre-allocated resources, and we assume the per-unit cost for purchasing real-time resources is 1 without loss of generality. We are interested in characterizing player $A$'s optimal investment subject to the above cost constraint, and given player $B$'s resource endowment $R_B>0$. This is formulated as the following optimization problem:
\begin{equation}\label{eq:optimal_investment_problem}
    \pi_A^\mathrm{opt} := \max_{(P,R_A) \in \mcal{I}(X_A)} \pi_A^*(P,R_A,R_B).
\end{equation}




In the result below, we derive the complete solution to the optimal investment problem \eqref{eq:optimal_investment_problem}. 

\begin{corollary}\label{cor:investment}
    Fix a monetary budget $X_A>0$, relative per-unit cost $c>0$, and $R_B>0$ real-time resources for player $B$.  Then, player $A$'s optimal investment in pre-allocated resources for the optimization problem in \eqref{eq:optimal_investment_problem} under the linear cost constraint in \eqref{eq:linear_cost_constraint} is
    \begin{equation}
        P^* = 
        \begin{cases}
            (1 - \frac{c}{2-c})\frac{X_A}{c}, &\text{if } c < t \\
            \in [0,(1 - \frac{c}{2-c})\frac{X_A}{c}], &\text{if } c = t \\
            0, &\text{if } c > t
        \end{cases}.
    \end{equation}
    where $t := \min\{1,\frac{X_A}{qR_B}\}$. The optimal investment in real-time resources is $R_A^* = X_A - cP^*$. The resulting payoff $\pi_A^\mathrm{opt}$ to player $A$ is given by
    \begin{equation}
        W\cdot
        \begin{cases}
            1 - \frac{qR_B}{2X_A}c(2-c), &\text{if } c < t \\
            1 - \frac{qR_B}{2X_A}, &\text{if } c \geq t \text{ and } \frac{X_A}{qR_B} \geq 1 \\
            \frac{X_A}{2qR_B}, &\text{if } c \geq t \text{ and } \frac{X_A}{qR_B} < 1 
        \end{cases}.
    \end{equation}
\end{corollary}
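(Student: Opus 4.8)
The plan is to reduce the two-dimensional constrained problem \eqref{eq:optimal_investment_problem} to a one-variable problem on the boundary of $\mcal{I}(X_A)$ and then exploit the quasi-concavity of $\pi_A^*$ noted after \Cref{thm:level_set}. First I would record that $\pi_A^*(P,R_A,R_B)$ is non-decreasing in each of $P$ and $R_A$ — immediate from the closed forms \eqref{eq:allcase1_payoff}--\eqref{eq:allcase2_payoff}, and intuitively since extra resources of either type cannot hurt player $A$. Hence every optimizer of \eqref{eq:optimal_investment_problem} lies on the budget line $R_A+cP=X_A$, so writing $g(P):=\pi_A^*(P,\,X_A-cP,\,R_B)$ we have $\pi_A^{\mathrm{opt}}=\max_{P\in[0,X_A/c]}g(P)$. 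Since $\pi_A^*$ is quasi-concave on $\mbb{R}_+^2$ and quasi-concavity is preserved under restriction to an affine line, $g$ is quasi-concave on $[0,X_A/c]$; thus $g$ is non-decreasing up to some point and non-increasing thereafter, and the set of maximizers is a (possibly degenerate) subinterval. It therefore suffices to (i) determine the sign of $g$ near $P=0$ and (ii), when $g$ is initially increasing, locate its interior maximizer.

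For (i), note that the endowment $(0,X_A)$ satisfies the hypothesis of Case~1 of \Cref{thm:equilibrium_characterization} exactly when $X_A\ge qR_B$, and Case~2 applies otherwise. If $X_A<qR_B$, then near $P=0$ one has $g(P)=W(X_A-cP)/\bigl(2(qR_B-P)\bigr)$, whose derivative equals $W(X_A-cqR_B)/\bigl(2(qR_B-P)^2\bigr)$ and so has the sign of $X_A-cqR_B$; thus $g$ is initially non-increasing iff $c\ge X_A/(qR_B)=t$. If $X_A\ge qR_B$, then $g(0)=W\bigl(1-qR_B/(2X_A)\bigr)\ge W/2$, so by \Cref{thm:level_set} the level curve through $(0,X_A)$ is the strictly convex parabolic branch, whose slope there is $-1$; comparing with the budget line's slope $-c$ shows $g$ is initially non-increasing iff $c\ge 1=t$. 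In both cases the condition is ``$c\ge t$'', and then $g$ is non-increasing throughout, giving $P^*=0$, $R_A^*=X_A$, and $\pi_A^{\mathrm{opt}}=g(0)$, which is $W\bigl(1-qR_B/(2X_A)\bigr)$ when $X_A/(qR_B)\ge1$ and $WX_A/(2qR_B)$ otherwise — matching the $c\ge t$ cases of the statement. When $c=t<1$ one checks that $g$ is in fact constant along the Case-2 segment of the budget line, which terminates at $P=(1-\tfrac{c}{2-c})\tfrac{X_A}{c}$ (the point where $X_A-cP=2(qR_B-P)^2/(2qR_B-P)$), yielding exactly the stated interval of optimizers.

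For (ii), assume $c<t$, so $g$ is strictly increasing at $P=0$ and its unique interior maximizer $P^*$ is the point where the budget line is tangent to a level curve of $\pi_A^*$. Using \Cref{thm:level_set}, on the parabolic branch the level-$\Pi$ curve has slope $-\bigl(qR_BW-(W-\Pi)P\bigr)/(qR_BW)$; equating this to $-c$ gives $P=(1-c)qR_BW/(W-\Pi)$ and then $R_A=c^2qR_BW/\bigl(2(W-\Pi)\bigr)$. Imposing $R_A+cP=X_A$ and solving the resulting linear equation in $\Pi$ yields $\pi_A^{\mathrm{opt}}=\Pi^*=W\bigl(1-qR_Bc(2-c)/(2X_A)\bigr)$, hence $P^*=(1-c)qR_BW/(W-\Pi^*)=\tfrac{2(1-c)X_A}{c(2-c)}=(1-\tfrac{c}{2-c})\tfrac{X_A}{c}$ and $R_A^*=X_A-cP^*=\tfrac{cX_A}{2-c}$. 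The remaining checks are routine: using $c<t\le X_A/(qR_B)$ one verifies $\Pi^*>0$, that the tangency point lies on the parabolic branch (equivalently $\Pi^*\ge cW/2$, i.e.\ $X_A\ge cqR_B$) and inside $[0,X_A/c]$, and that its payoff exceeds that of the corner $(X_A/c,0)$; substituting $(P^*,R_A^*)$ back into \eqref{eq:allcase1_payoff} confirms the value $\Pi^*$.

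The main obstacle is the bookkeeping across the piecewise structure: both $\pi_A^*$ (Case~1 versus Case~2) and the level curves (linear versus parabolic branch) change form along the budget line, so one must track which branch is active near the relevant point in every regime of the pair $(c,\,X_A/(qR_B))$ — in particular giving a rigorous implicit-function argument, from \Cref{thm:level_set}, for the ``$\mathrm{sign}(1-c)$'' and ``$\mathrm{sign}(X_A-cqR_B)$'' claims used above — and handling the borderline $c=t$ at which the optimizer set becomes a nondegenerate interval. Quasi-concavity is what makes this tractable, since it promotes the purely local information at $P=0$ and at a single interior critical point to a global conclusion, so that no separate argument on the ``far'' pieces of the budget line is needed.
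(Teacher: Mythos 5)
Your proposal is correct and takes essentially the same route as the paper, which only sketches its argument: restrict to the budget segment and locate the level curve of $\pi_A^*$ from Theorem \ref{thm:level_set} that is tangent to it, using monotonicity and the convexity of the level sets, with the all-real-time corner $(0,X_A)$ optimal when no tangency exists (i.e., $c \geq t$). Your tangency computations and the resulting expressions for $P^*$, $R_A^*$, and $\pi_A^{\mathrm{opt}}$ all check out against the stated corollary.
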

The above solution is obtained by leveraging the level set characterization from \Cref{thm:level_set}, and the fact that the level sets are strictly decreasing and convex for $\Pi\in(0,W)$.  We omit details of the proof for space considerations. A visual illustration of how the optimal investments are determined is shown in Figure \ref{fig:linear_cost}. The budget constraint $\mcal{I}(X_A)$ is a line segment in $\mbb{R}_+^2$, and we thus seek the level curve that lies tangent to the segment. In cases where the cost $c$ is sufficiently high, no level curve lies tangent to $\mcal{I}(X_A)$, and, thus, player $A$ invests all of her budget in real-time resources.

\begin{figure}[t!]
    \centering
    \includegraphics[width=0.45\textwidth,trim={0 7.75cm 29.25cm 0}]{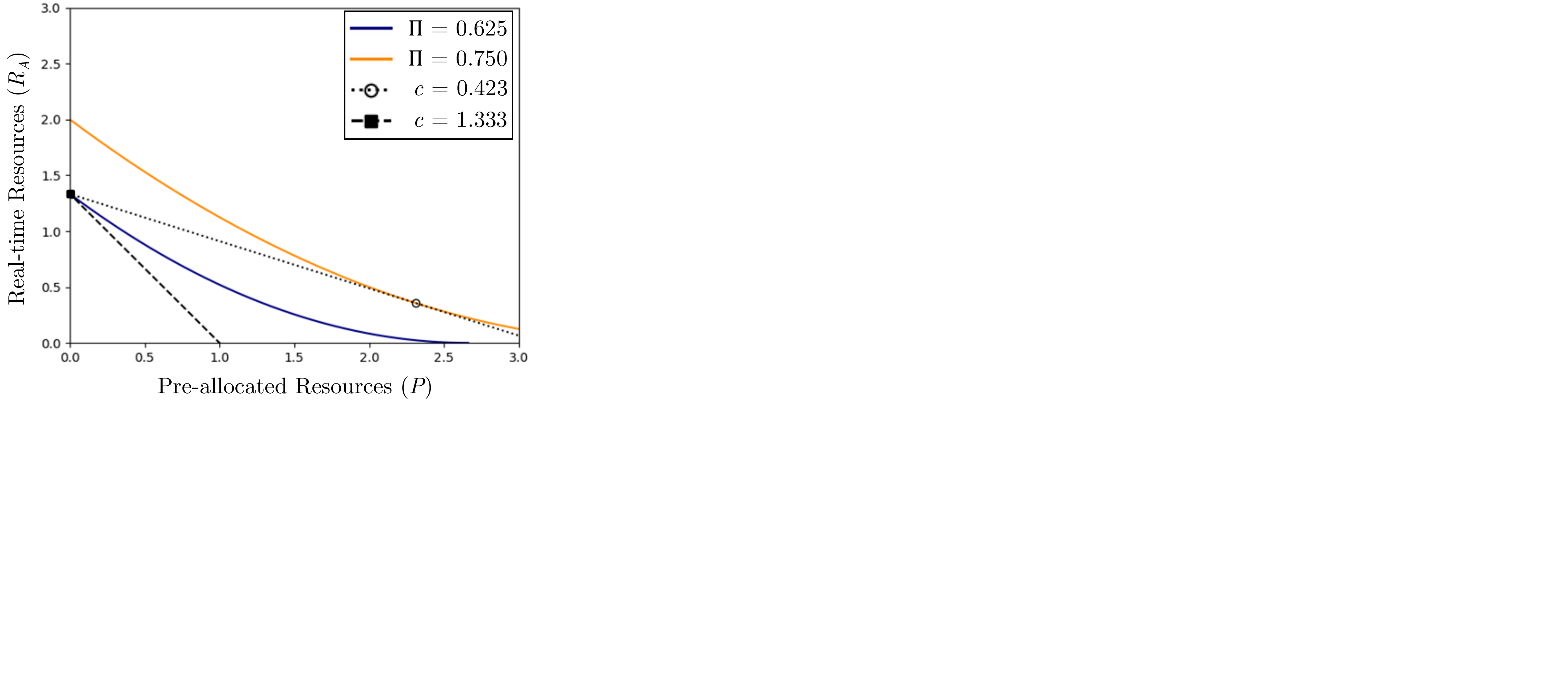}
    \caption{The optimal investment $(P^*,R^*_A)\in\mbb{R}^2_+$ subject to the linear cost constraint in \eqref{eq:linear_cost_constraint}.  Here, we consider the optimal investment problem when $X_A=4/3$, $qR_B=W=1$, and $c\in\{0.423,1.333\}$.  Observe that the set of feasible investments $\mcal{I}(X_A)$ is the line segment connecting $(0,X_A)$ and $(X_A/c,0)$. The optimal investment lies on the level curve tangent to this line segment. For example, when $c=0.423$, the optimal investment is $(2.309,0.357)$ (unfilled circle), as $\mcal{I}(X_A)$ (dotted, black line) is tangent to the level curve with $\Pi=0.750$ (solid, orange line). For sufficiently high cost $c$, $\mcal{I}(X_A)$ will not be tangent to any level curve, and the optimal investment is $(0,X_A)$.  For example, when $c=1.333$, observe that $\mcal{I}(X_A)$ (dashed, black line) is not tangent even to the level curve with $\Pi=0.625$ (solid, blue line), and the optimal investment is $(0,4/3)$ (filled square).}
    \label{fig:linear_cost}
\end{figure}

\section{Conclusion}

In this manuscript, we studied the strategic role of pre-allocations in competitive interactions under a two-stage General Lotto game model. We identified an explicit expression for the set of pre-allocated and real-time budget pairs that correspond with a given desired performance.  We then used this explicit expression to derive the optimal dynamic investment strategy under a given linear cost constraint, and to compare the relative effectiveness of pre-allocated and real-time resources when deployed in isolation.  Exciting directions for future work include studying the strategic outcomes (i.e., equilibria) when both players can make pre-allocations, and introducing heterogeneities in players' battlefield valuations and resource effectiveness to the model.

\bibliographystyle{IEEEtran}
\bibliography{sources}

\appendix

\subsection{Method to derive equilibria of second-stage subgame}\label{sec:Vu_method}

The recent work of Vu and Loiseau \cite{Vu_EC2021} provides a general method to derive an equilibrium of the second stage subgame from the GL-P game, which is termed a General Lotto game with favoritism (GL-F). In a GL-F game, the pre-allocation vector $\bs{p}$ is an exogenous parameter. We denote an instance of as $\text{GL-F}(\bs{p},R_A,R_B)$. The method to calculate an equilibrium involves solving the following system\footnote{The problem setting considered in their method is more general, admitting possibly negative pre-allocations $p_b < 0$ (i.e. favoring player $B$), asymmetries in players' battlefield valuations $w_b>0$, and different resource effectiveness parameters $q_b$ for each battlefield. However, exact closed-form solutions under heterogeneous values $\bs{w}$, arbitrary pre-allocations $\bs{p}$, and effectiveness parameters $q_b$ are generally unattainable.} of two equations for two unknowns $(\kappa_A,\kappa_B) \in \mbb{R}_{++}^2$: 
\begin{equation}\label{eq:SOE}
    \begin{aligned}
        R_A \!=\! \sum_{b=1}^n \frac{[h_b(\kappa_A,\kappa_B) - p_b]^2}{2qw_b\kappa_B}, \ 
        R_B \!= \sum_{b=1}^n \frac{h_b^2(\kappa_A,\kappa_B) - p_b^2}{2qw_b\kappa_A}
    \end{aligned}
\end{equation}
where $h_b(\kappa_A,\kappa_B) := \min\{ q w_b\kappa_B, w_b\kappa_A + p_b\}$ for $b \in \mcal{B}$. The above equations correspond to the budget constraint \eqref{eq:lotto_budget_constraint} for both players. There always exists a solution $(\kappa_A^*,\kappa_B^*) \in \mbb{R}_{++}^2$ to this system \cite{Vu_EC2021}, and corresponds to the following equilibrium payoffs. 

\begin{lemma}[Adapted from \cite{Vu_EC2021}]\label{lem:SOE}
    Suppose $(\kappa_A^*,\kappa_B^*) \in \mbb{R}_{++}^2$ solves \eqref{eq:SOE}.  Let $\mcal{B}_1 := \{b\in\mcal{B} : h_b(\kappa_A^*,\kappa_B^*) = qw_b\kappa_B\}$ and $\mcal{B}_2 = \mcal{B}\backslash \mcal{B}_1$.
    Then there is a corresponding equilibrium $(F_A^*,F_B^*)$ of the game $\text{GL-F}(\bs{p},R_A,R_B)$ where player $A$'s equilibrium payoff is given by 
    \begin{equation} \label{eq:playerA_payoff}
        \begin{aligned}
            \pi_A(\bs{p},R_A,R_B) &:= \sum_{b\in\mcal{B}_1} \!w_b\!\left[1 - \frac{q\kappa_B^*}{2\kappa_A^*}\left(1 - \frac{p_i^2}{(qw_b\kappa_B)^2} \right) \right] \\
            &\quad + \sum_{b\in\mcal{B}_2} w_b \frac{\kappa_A^*}{2q\kappa_B^*}
        \end{aligned}
    \end{equation}
    and the equilibrium payoff to player $B$ is $\pi_B(\bs{p},R_A,R_B) = W - \pi_A(\bs{p},R_A,R_B)$.
    
\end{lemma}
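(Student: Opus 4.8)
The plan is to follow the General Lotto template of Vu and Loiseau: exhibit the equilibrium $(F_A^*,F_B^*)$ through the univariate marginals of the \emph{effective} allocations $y_b := x_{A,b}+p_b$ and $z_b := qx_{B,b}$ on each battlefield $b$, and then verify it is a mutual best response. I would posit that on battlefield $b$ both effective allocations are supported on a common interval $[0,h_b]$ with $h_b = h_b(\kappa_A^*,\kappa_B^*)$, where player $A$'s effective allocation is uniform on $(p_b,h_b]$ with an atom at the floor $p_b$, and player $B$'s is uniform on $(p_b,h_b]$ with an atom at $0$ (and no mass on $(0,p_b)$, since such bids lose for certain against $A$'s floor yet cost resources). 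The densities and atom sizes are then pinned down by the standard indifference requirement: against the opponent's marginal, each player's battlefield payoff net of a per-unit resource price must be affine in its own bid over the support, which forces linear CDFs. Identifying the two per-unit prices with $1/\kappa_A^*$ and $1/\kappa_B^*$ fixes player $A$'s density at $\frac{1}{qw_b\kappa_B^*}$ and player $B$'s at $\frac{1}{w_b\kappa_A^*}$ on $(p_b,h_b]$.

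I would then check feasibility. Integrating these marginals, player $A$'s expected real-time expenditure on battlefield $b$ is $\frac{[h_b-p_b]^2}{2qw_b\kappa_B^*}$ and player $B$'s is $\frac{h_b^2-p_b^2}{2qw_b\kappa_A^*}$; summing over $b$ and invoking the hypothesis that $(\kappa_A^*,\kappa_B^*)$ solves \eqref{eq:SOE} shows the two budget constraints \eqref{eq:lotto_budget_constraint} hold with equality. Any joint laws $F_A^*,F_B^*$ with these prescribed marginals are admissible, since Lotto only constrains the aggregate expenditure in expectation and linearity of expectation transfers the battlefield-wise equalities to the totals. An important sanity check at this stage is that the atom masses are nonnegative: player $B$'s atom at $0$ equals $1-\frac{h_b-p_b}{w_b\kappa_A^*}$, which is nonnegative exactly when $qw_b\kappa_B^*\le w_b\kappa_A^*+p_b$, i.e. when $b\in\mcal{B}_1$; in the complementary case $b\in\mcal{B}_2$ this atom vanishes and instead player $A$ carries the larger atom at $p_b$. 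This is precisely the role of the partition $\mcal{B}=\mcal{B}_1\cup\mcal{B}_2$.

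The crux is the equilibrium verification. Within a single battlefield, indifference over the support holds by construction, so no local perturbation of a bid helps. The subtle point is that the budget is aggregate across battlefields, so I must rule out profitable \emph{reallocation} of expected resources between battlefields. This is handled by the multiplier structure: because the per-unit marginal return on resources has been equalized to the common price $1/\kappa_A^*$ (resp. $1/\kappa_B^*$) on every battlefield in the support, the associated Lagrangian is maximized by the candidate marginals, certifying a global best response; one also checks that bidding above $h_b$ is strictly wasteful and that, for player $B$, bids in $(0,p_b)$ are dominated by $0$. I expect this global reallocation argument, together with the careful bookkeeping of atoms at the degenerate endpoints (e.g. $p_b=0$, or a battlefield switching membership between $\mcal{B}_1$ and $\mcal{B}_2$), to be the main obstacle.

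Finally, I would compute player $A$'s payoff $\sum_b w_b\Pr(y_b>z_b)$ by integrating the win probability against the constructed marginals in each case. For $b\in\mcal{B}_2$, where $h_b=w_b\kappa_A^*+p_b$, this yields $w_b\frac{\kappa_A^*}{2q\kappa_B^*}$, while for $b\in\mcal{B}_1$, where $h_b=qw_b\kappa_B^*$, it yields $w_b\big[1-\frac{q\kappa_B^*}{2\kappa_A^*}(1-p_b^2/(qw_b\kappa_B^*)^2)\big]$; summing over the partition gives \eqref{eq:playerA_payoff}, and player $B$'s payoff follows from $\pi_B=W-\pi_A$.
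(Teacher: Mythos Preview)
The paper does not supply its own proof of this lemma; it is stated as adapted from Vu and Loiseau \cite{Vu_EC2021}, with only the remark that ``the equilibrium strategies are characterized by marginal distributions detailed in \cite{Vu_EC2021}.'' Your proposal correctly reconstructs precisely that cited argument --- uniform marginals for the effective bids on $(p_b,h_b]$ with atoms determined by the $\mcal{B}_1/\mcal{B}_2$ split, feasibility verified via the budget system \eqref{eq:SOE}, the common-multiplier (Lagrangian) certificate of global best response, and the battlefield-wise win-probability computation yielding \eqref{eq:playerA_payoff} --- so it is both correct and aligned with the approach the paper relies on.
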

The equilibrium strategies are characterized by marginal distributions detailed in \cite{Vu_EC2021}.

\subsection{Proof of Theorem \ref{thm:equilibrium_characterization}}\label{sec:thm_proof}

The proof follows two parts:  In Part 1, we establish that, for given $P,R_A,R_B>0$ and $\bs{w}\in\mbb{R}^n_{++}$, $\bs{p}^* = \frac{P}{W}\bs{w}$ is player $A$'s optimal pre-allocation profile in Stage 1 of GL-P.  Then, in Part 2, we derive an explicit expression for player $A$'s payoff in Stage 2 under the optimal pre-allocation profile $\bs{p}^*$ derived in Part 1.  Throughout the proof, we use $\pi_i(\bs{p},R_A,R_B)$, $i\in\{A,B\}$, to denote the players' payoffs in the Stage 2 sub-game for fixed pre-allocation profile $\bs{p}\in\Delta_n(P)$.  Recall that the Stage 2 sub-game amounts to a General Lotto game with favoritism $\text{GL-F}(\bs{p},R_A,R_B)$.

\smallskip\noindent-- {\bf Part 1:} The proof amounts to showing that $\bs{p}^* = \frac{P}{W}\bs{w}$ is a global maximizer of player $A$'s equilibrium payoff $\pi^*_A(\bs{p},R_A,R_B)$ for $\bs{p} \in \Delta_n(P)$. For the following analysis, we define $\mbb{T}_n := \{\bs{z} \in \mbb{R}^n : \sum_{b=1}^n z_b = 0$ as the tangent space of $\Delta_n(P)$. The lemma below first establishes that $\bs{p}^*$ is a local maximizer when either $\mcal{B}_1=\mcal{B}$ or $\mcal{B}_2=\mcal{B}$.

\begin{lemma} \label{lem:local_maximizer}
    The pre-allocation $\bs{p}^* = \frac{P}{W}\bs{w}$ is a local maximizer of $\pi_A(\bs{p},R_A,R_B)$ over $\bs{p} \in \Delta_n(P)$, for any $P,R_A,R_B > 0$.
\end{lemma}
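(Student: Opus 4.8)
The plan is to localize around $\bs{p}^*$ by first observing that, \emph{at $\bs{p}^*$ itself}, the index set $\mcal{B}$ always falls into one of the two ``pure'' partitions appearing in \Cref{lem:SOE}, and then to analyze each pure case. Indeed, $\bs{p}^*=\tfrac{P}{W}\bs{w}$ makes the ratio $p_b^*/w_b=P/W$ independent of $b$, so $h_b(\kappa_A^*,\kappa_B^*)=\min\{qw_b\kappa_B^*,\,w_b\kappa_A^*+p_b^*\}=w_b\min\{q\kappa_B^*,\,\kappa_A^*+P/W\}$ selects the same branch of the minimum at every battlefield. Hence at $\bs{p}^*$ either $\mcal{B}_1=\mcal{B}$ (when $q\kappa_B^*\le\kappa_A^*+P/W$) or $\mcal{B}_2=\mcal{B}$ (when the reverse strict inequality holds). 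Throughout I use that a solution $(\kappa_A(\bs{p}),\kappa_B(\bs{p}))$ of \eqref{eq:SOE} can be chosen to vary continuously with $\bs{p}$ near $\bs{p}^*$.

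\textbf{Case $\mcal{B}_2=\mcal{B}$.} Here $q\kappa_B^*>\kappa_A^*+P/W$ strictly, and since $\bs{p}\mapsto\bigl(p_b/w_b,\,q\kappa_B(\bs{p})-\kappa_A(\bs{p})\bigr)$ is continuous, the strict inequalities $q\kappa_B(\bs{p})-\kappa_A(\bs{p})>p_b/w_b$ persist for all $b$ on a neighborhood $\mcal{N}$ of $\bs{p}^*$, so $\mcal{B}_2(\bs{p})=\mcal{B}$ on $\mcal{N}$. Substituting $h_b=w_b\kappa_A+p_b$ into \eqref{eq:SOE} reduces the two equations to $R_A=\tfrac{W\kappa_A^2}{2q\kappa_B}$ and $R_B=\tfrac{\kappa_A W+2P}{2q}$, which determine $\kappa_A=\tfrac{2(qR_B-P)}{W}$ and then $\kappa_B$ as functions of $P$ alone; by \eqref{eq:playerA_payoff}, $\pi_A(\bs{p},R_A,R_B)=W\tfrac{\kappa_A}{2q\kappa_B}=\tfrac{WR_A}{2(qR_B-P)}$ is therefore constant on $\mcal{N}$, and $\bs{p}^*$ is trivially a local maximizer.

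\textbf{Case $\mcal{B}_1=\mcal{B}$, strict.} Suppose $q\kappa_B^*<\kappa_A^*+P/W$; as above this regime persists on a neighborhood $\mcal{N}$. Put $S(\bs{p}):=\sum_b p_b^2/w_b$. With $h_b=qw_b\kappa_B$, \eqref{eq:SOE} becomes $q^2W\kappa_B^2-2q(P+R_A)\kappa_B+S=0$ and $2qR_B\kappa_A=q^2W\kappa_B^2-S$, so $(\kappa_A,\kappa_B)$, and hence $\pi_A$ via \eqref{eq:playerA_payoff}, depend on $\bs{p}$ only through $S$. Solving the quadratic (the root $qW\kappa_B=(P+R_A)+\sqrt{(P+R_A)^2-WS}$) and back-substituting yields an explicit $\pi_A=\phi(S)$, and a direct computation shows $\phi'\le 0$. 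Since $S(\bs{p})=\sum_b w_b(p_b/w_b)^2\ge\bigl(\sum_b p_b\bigr)^2/\bigl(\sum_b w_b\bigr)=P^2/W$ by Cauchy--Schwarz, with equality iff $\bs{p}\propto\bs{w}$, i.e.\ iff $\bs{p}=\bs{p}^*$, we get $\pi_A(\bs{p})=\phi(S(\bs{p}))\le\phi(P^2/W)=\pi_A(\bs{p}^*)$ on $\mcal{N}$ (in fact globally within this regime).

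\textbf{The obstacle: the borderline case.} The remaining possibility $q\kappa_B^*=\kappa_A^*+P/W$ --- which one checks is exactly $qR_B\ge P$ and $R_A=\tfrac{2(qR_B-P)^2}{P+2(qR_B-P)}$, the boundary between the two cases of \Cref{thm:equilibrium_characterization} --- is the hard part, because no single partition is locally stable: along $\bs{p}^*+t\bs{z}$ with $\bs{z}\in\mbb{T}_n$ the battlefields split, to leading order, according to the sign of $z_b/w_b-\mu$ with $\mu:=\lim_{t\to 0^+}\tfrac{q\kappa_B(t)-\kappa_A(t)-P/W}{t}$, and $\mu$ is itself coupled to the induced partition. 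I would handle this by a one-sided directional-derivative analysis: justify that $(\kappa_A,\kappa_B)$ is right-differentiable in $t$, solve the first-order system obtained by differentiating \eqref{eq:SOE} jointly with the leading-order partition rule to obtain $\dot\kappa_A,\dot\kappa_B$, differentiate \eqref{eq:playerA_payoff}, and show $\tfrac{d}{dt}\big|_{0^+}\pi_A(\bs{p}^*+t\bs{z})\le 0$ for every admissible direction $\bs{z}$; in directions where this derivative vanishes, invoke that $S$ has a strict minimum at $\bs{p}^*$ (so $S$ increases to second order along $\mbb{T}_n$) and re-run the previous argument at second order. Verifying the sign of this directional derivative uniformly over $\mbb{T}_n$ is the main technical burden; combining it with the two pure cases establishes the lemma.
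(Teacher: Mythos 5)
Your treatment of the two pure regimes is essentially the paper's own argument, and it is correct. At $\bs{p}^*$ the ratio $p_b^*/w_b$ is constant, so the minimum in $h_b$ selects the same branch at every battlefield, which is exactly the paper's starting observation. In the regime $\mcal{B}_2=\mcal{B}$ your reduction of \eqref{eq:SOE} gives $\kappa_A^*=\tfrac{2(qR_B-P)}{W}$ and the constant payoff $\tfrac{WR_A}{2(qR_B-P)}$, matching the paper. In the regime $\mcal{B}_1=\mcal{B}$ your observation that the payoff depends on $\bs{p}$ only through $S=\sum_b p_b^2/w_b$, together with $S\geq P^2/W$ with equality iff $\bs{p}=\bs{p}^*$, is the paper's argument; you leave ``$\phi'\le 0$'' unverified, but it does hold, since the quadratic for $q\kappa_B$ and \eqref{eq:playerA_payoff} collapse the payoff to $W\bigl(1-\tfrac{qR_B}{P+R_A+\sqrt{(P+R_A)^2-WS}}\bigr)$, which is manifestly decreasing in $S$ (this is in fact a cleaner route than the paper's gradient computation, whose stated sign is off although its conclusion is the same).

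The genuine gap is the borderline case you flag yourself: when $qR_B\ge P$ and $R_A=\tfrac{2(qR_B-P)^2}{P+2(qR_B-P)}$ (the boundary between the two cases of \Cref{thm:equilibrium_characterization}), perturbations of $\bs{p}^*$ split the battlefields into a genuinely mixed partition, neither pure-case formula applies, and your proposed one-sided directional-derivative analysis (right-differentiability of $(\kappa_A,\kappa_B)$, the leading-order partition rule, a uniform sign over $\mbb{T}_n$, plus a second-order argument where the derivative vanishes) is stated as a plan, not carried out; as you note, that is exactly the technical burden, so the lemma is not proved in that parameter regime. For what it is worth, the paper does not resolve this inside the lemma either: it reads the lemma as covering only the two pure partitions, and handles mixed partitions afterwards in Part 1 of the proof of \Cref{thm:equilibrium_characterization}, by solving \eqref{eq:SOE} in closed form for an arbitrary split $(\mcal{B}_1,\mcal{B}_2)$, computing $\pi_A=W_1+\tfrac{\sqrt{H_1H_2}-C_1C_2}{\|\bs{p}_1\|_w^2}$, and showing via the critical-point conditions that the only mixed critical point lies on the Case-2 boundary \eqref{eq:case3_mostrestrictive} with the same payoff. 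If you want to close your gap, importing that explicit mixed-partition computation is likely easier than establishing one-sided differentiability of the implicit solution and chasing the sign of the directional derivative direction by direction.
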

\begin{proof}
    From \Cref{lem:SOE} and the definition of $h_b(\kappa_A,\kappa_B)$ in \Cref{sec:Vu_method}, we observe that the solution to \eqref{eq:SOE} under the pre-allocation $s^*$ is always in one of two completely symmetric cases: 1) $\mcal{B}_1 = \mcal{B}$; or 2) $\mcal{B}_2 = \mcal{B}$.  Thus, we need to show $s^*$ is a local maximizer in both cases.

    \smallskip\noindent\underline{{\bf Case 1 ($\mcal{B}_1 = \mcal{B}$):}} For $\bs{p} \in \Delta_n(P)$, the system \eqref{eq:SOE} is written
    \begin{equation}\label{eq:case22_SOE}
        \begin{aligned}
            &R_A = \sum_{b=1}^n \frac{(qw_b\kappa_B - p_b)^2}{2qw_b\kappa_B} \text{ and } R_B = \sum_{b=1}^n \frac{(qw_b\kappa_B)^2-p_b^2}{2qw_b\kappa_A} \\
            &\text{where } 0 < qw_b\kappa_B - p_b \leq \kappa_A \text{ holds } \forall b \in \mcal{B}.
        \end{aligned}
    \end{equation}
    It yields an algebraic solution
    \begin{equation} \label{eq:case1solution_closedform}
        \begin{aligned}
            q\kappa_B^* &= \frac{1}{W}\left[P+R_A + \sqrt{(P+R_A)^2 - W\|\bs{p}\|_{\bs{w}}^2} \right] \\ 
            \kappa_A^* &= \frac{(P+R_A)q\kappa_B^* - \|\bs{p}\|_{\bs{w}}^2}{qR_B}.
        \end{aligned}
    \end{equation}
    where $\|\bs{p}\|_{\bs{w}}^2 = \sum_{b=1}^n \frac{p_b^2}{w_b}$. This solution needs to satisfy the set of conditions $0 < qw_b\kappa_B - p_b \leq \kappa_A \ \forall b \in \mcal{B}$, but the explicit characterization of these conditions is not needed to show that $s^*$ is a local maximum. Indeed, first observe that the expression for $q\kappa_B^*$ is required to be real-valued, which we can write as the condition
    \begin{equation}
        \bs{p} \in R^{(1n)} := \left\{\bs{p} \in \Delta_n(P) : \|\bs{p}\|_{\bs{w}}^2 < \frac{(P+R_A)^2}{W} \right\}.
    \end{equation}
    We thus have a region $R^{(1n)}$ for which the expression of player $A$'s equilibrium payoff (derived using Lemma \ref{lem:SOE}) is well-defined:
    \begin{equation}\label{eq:case22_payoff}
        \pi_A^{(1n)}(\bs{p}) := W\left(\!1\! - \!\frac{qR_B}{f(\|\bs{p}\|_{\bs{w}})}\!\left(\!1 - \frac{W\|\bs{p}\|_{\bs{w}}^2}{(P\!+\!R_A\!+\!f(\|\bs{p}\|_{\bs{w}}))^2} \right) \!\right)
    \end{equation}
    where $f(\|\bs{p}\|_{\bs{w}}) := \sqrt{(P+R_A)^2 - W\|\bs{p}\|_{\bs{w}}^2}$. The partial derivatives are calculated to be
    \begin{equation}
        \frac{\partial \pi_A^{(1n)}}{\partial p_b}(\bs{p}) = \frac{p_b}{w_b} \cdot\frac{2W^2qR_B }{f(\|\bs{p}\|_{\bs{w}})(P+R_A+f(\|\bs{p}\|_{\bs{w}}))^2}
    \end{equation}
    A critical point of $\pi_A^{(1n)}$ must satisfy $\bs{z}^\top \nabla \pi_A^{(1n)}(\bs{p})=0$ for any $\bs{z} \in \mbb{T}_n$. Indeed for any $\bs{p} \in R^{(1n)}$, we calculate
    \begin{equation}
        \begin{aligned}
            (\bs{p} - \frac{P}{W}\bs{w})^\top \nabla \pi_A^{(1n)}(\bs{p}) &= g(\|\bs{p}\|_{\bs{w}})\cdot \left(\|\bs{p}\|_{\bs{w}}^2 - \frac{P^2}{W} \right) \\
            &\geq 0
        \end{aligned}
    \end{equation}
    where $g(\|\bs{p}\|_{\bs{w}}):=\frac{2W^2qR_B }{f(\|\bs{p}\|_{\bs{w}})(P+R_A+f(\|\bs{p}\|_{\bs{w}}))^2} > 0$ for any $\bs{p} \in R^{(1n)}$. The inequality above is met with equality if and only if $\bs{p} = \bs{p}^*$. This is due to the fact that $\min_{\bs{p} \in \Delta_n(P)} \|\bs{p}\|_{\bs{w}}^2 = \|\bs{p}^*\|_{\bs{w}}^2 = \frac{P^2}{W}$. Thus, $\bs{p}^*$ is the unique maximizer of $\pi_A^{(1n)}(\bs{p})$ on $R^{(1n)}$.
    
    \smallskip\noindent\underline{{\bf Case 2 ($\mcal{B}_2 = \mcal{B}$):}}
    For $\bs{p}\in\Delta_n(P)$, the system is written as
    \[ R_A = \sum^n_{b=1} \frac{(w_b \kappa_A)^2}{2q w_b \kappa_B} \text{ and } 
       R_B = \sum^n_{b=1} \frac{(w_b \kappa_A-p_b)^2-(p_b)^2}{2q w_b \kappa_A}, \]
    where $q w_b \kappa_B - p_b > w_b \kappa_A$ holds for all $b\in\mcal{B}$.  This readily yields the algebraic solution:
    \begin{equation} \label{eq:case2solution_closedform}
        q\kappa_B^* = \frac{2}{W} \frac{(q R_B-P)^2}{R_A} \text{ and } 
        \kappa_A^* = \frac{2}{W} (q R_B-P).
    \end{equation}
    For this solution to be valid, the following conditions are required:
    
        \noindent $\bullet$ $\kappa_A^*,q\kappa_B^* \in \mbb{R}_{++}$:  This requires that $qR_B-P>0$.
        
        \noindent $\bullet$ $q w_b \kappa_B^* - p_b > w_b \kappa_A^*$ for all $b\in\mcal{B}$:  This requires that
        \[ \frac{2}{W} \frac{(q R_B-P)^2}{R_A} - \frac{2}{W} (q R_B-P) - \max_{b} \{\frac{p_b}{w_b}\} > 0. \]
        The left-hand side is quadratic in $q R_B-P$, and, thus, requires either
        \[\begin{aligned}
            q R_B-P &< \frac{2/W - \sqrt{\frac{4}{W^2}+4 \max_b\{\frac{p_b}{w_b}\}\frac{2}{W R_A}}}{4/(W R_A)}, \text{ or }\\
            q R_B-P &> \frac{2/W + \sqrt{\frac{4}{W^2}+4 \max_b\{\frac{p_b}{w_b}\}\frac{2}{W R_A}}}{4/(W R_A)}.
        \end{aligned} \]
        The former cannot hold since the numerator on the right-hand side is strictly negative, but $\kappa_A^*,q\kappa_B^* \in \mbb{R}_{++}$ requires $q R_B-P>0$.  Thus, the latter must hold, which simplifies to the condition
        \begin{equation} \label{eq:case3_mostrestrictive}
            q R_B-P > \frac{R_A}{2} \left[ 1 + \sqrt{1 + \frac{2W}{R_A}\max_b\{\frac{p_b}{w_b}\}} \right].
        \end{equation}
    Clearly, \eqref{eq:case3_mostrestrictive} is more restrictive than $q R_B-P>0$, and, thus, dictates the boundary of Case 2.
    
    For any $\bs{p}\in\Delta_n(P)$ such that all battlefields are in Case 2, the expression for player $A$'s payoff in \eqref{eq:playerA_payoff} simplifies to
    \[ \pi_A(\bs{p},R_A,R_B) = \sum^n_{b=1} w_b \frac{\kappa_A^*}{2q \kappa_B^*} 
        = \frac{W}{2} \frac{R_A}{q R_B-P}, \]
    where we use the expression for $q\kappa_B^*$ and $\kappa_A^*$ in \eqref{eq:case2solution_closedform}.  Observe that player $A$'s payoff is constant in the quantity $\bs{p}$.  Thus, for any $\bs{p}$ that satisfies \eqref{eq:case3_mostrestrictive}, it holds that all battlefields are in Case 2, and that player $A$'s payoff is the above.  We conclude the proof noting that, for given quantities $R_A$ and $P$, if there exists any $\bs{p}\in\Delta_n(P)$ such that \eqref{eq:case3_mostrestrictive} is satisfied, then $\bs{p}^* = \bs{w} \cdot (P/W)$ must also satisfy \eqref{eq:case3_mostrestrictive}, since $||\bs{p}||_\infty \geq ||\bs{p}^*||_\infty$ and the right-hand side in \eqref{eq:case3_mostrestrictive} is increasing in $||\bs{p}||_\infty$.
\end{proof}

Next, we prove that the function $\pi_A(\bs{p},R_A,R_B)$ is maximized by $\bs{p}^*=\frac{P}{W}\bs{w}$. We showed in Lemma \ref{lem:local_maximizer} that $\bs{p}^*$ is a local maximizer over $\bs{p}\in\Delta_n(P)$ when either $\mcal{B}_1=\mcal{B}$ or $\mcal{B}_2=\mcal{B}$.  It remains to be shown that player $A$ cannot achieve a higher payoff for $\bs{p}\in\Delta_n(P)$ that results in both sets $\mcal{B}_1$ and $\mcal{B}_2$ being nonempty.  Throughout the proof, we will use the short-hand notation $W_j=\sum_{b\in\mcal{B}_j} w_b$, $P_j=\sum_{b\in\mcal{B}_j} p_b$ and $\bs{p}_j=(p_b)_{b\in\mcal{B}_j}$, $j=1,2$, for conciseness.

For $\bs{p}\in\Delta_n(P)$, we have that
\begin{equation*}
\begin{aligned}
    X_A &= \sum_{b\in\mcal{B}_1} \frac{(q w_b \kappa_B-p_b)^2}{2 q w_b \kappa_B} 
            + \sum_{b\in\mcal{B}_2} \frac{(w_b \kappa_A)^2}{2 q w_b \kappa_B}, \\
    X_B &= \sum_{b\in\mcal{B}_1} \frac{(q w_b \kappa_B)^2-(p_b)^2}{2 q w_b \kappa_A} 
            + \sum_{b\in\mcal{B}_2} \frac{(w_b \kappa_A+p_b)^2-(p_b)^2}{2 q w_b \kappa_A},
\end{aligned}
\end{equation*}
where $0<qw_b\kappa_B-p_b\leq w_b\kappa_A$ holds for all $b\in\mcal{B}_1$, and $q w_b \kappa_B-p_b>w_b\kappa_A$ holds for all $b\in\mcal{B}_2$.  The system of equations readily gives the expression:
\begin{equation} \label{eq:mixedcases_systemeq}
\begin{aligned}
    W_1 (q\kappa_B)^2 + W_2 (\kappa_A)^2 &= 2q\kappa_B (X_A+P_2) - ||\bs{p}_1||^2_w \\
    &= 2\kappa_A (q X_B-P_2) + ||\bs{p}_1||^2_w,
\end{aligned}
\end{equation}
where recall that $||\bs{p}_1||^2_w = \sum_{b\in\mcal{B}_2} [(p_b)^2/w_b]$.  The solution to the above system of equations is
\begin{equation} \label{eq:mixedcases_solution}
\begin{aligned}
    q\kappa^*_B &= \frac{C_1 H_2 \pm \sqrt{(C_2)^2H_1H_2}}{W_1 (C_2)^2 + W_2 (C_1)^2}, \\
    \kappa^*_A  &= \frac{C_2 H_1 \pm \sqrt{(C_1)^2H_1H_2}}{W_1 (C_2)^2 + W_2 (C_1)^2},
\end{aligned}
\end{equation}
where we introduce the short-hand notation $C_1=R_A+P_1$, $C_2=qR_B-P_2$, $H_1=(C_1)^2-W_1||\bs{p}_1||^2_w$ and $H_2=(C_2)^2+W_2||\bs{p}_1||^2_w$, for conciseness.  We consider only the scenario where $\pm=+$ in \eqref{eq:mixedcases_solution}, since the expression for $\kappa^*_A$ is strictly negative when $\pm=-$.  Simply observe that $C_1>0$, $(C_1)^2 > H_1$, $0<(C_2)^2 < H_2$ and, thus, that either (i) $H_1>0$, $C_2>0$ and $0<C_2H_1 < C_1\sqrt{H_1H_2}$, (ii) $H_1<0$, $C_2<0$ and $0< C_2H_1 = |C_2| |H_1| < C_1 \sqrt{|H_1| |H_2|}$, or (iii) only one of $H_1$ or $C_2$ is negative, in which case $C_2H_1 < 0$.

Substituting \eqref{eq:mixedcases_solution} into \eqref{eq:playerA_payoff} and simplifying, we obtain
\begin{equation}
\begin{aligned}
    \pi_A(\bs{p},R_A,R_B) 
    &= W_1 + \frac{\sqrt{H_1H_2}-C_1C_2}{||\bs{p}_1||^2_w},
\end{aligned}
\end{equation}

\noindent and the partial derivatives of $\pi_A(\bs{p},R_A,R_B)$ with respect to $p_b$ for $b\in\mcal{B}_1$ and $b\in\mcal{B}_2$, respectively, are:
\begin{equation} \label{eq:mixed_partials}
\begin{aligned}
    \left. \frac{\partial}{\partial p_b} \pi_A \right|_{b\in\mcal{B}_1} &= 
        \frac{-p_b/w_b}{(||\bs{p}_1||^2_w)^2\sqrt{H_1H_2}} (C_1\sqrt{H_2}-C_2\sqrt{H_1})^2 \\
        &\quad +\frac{1}{||\bs{p}_1||^2_w\sqrt{H_1}}(C_1\sqrt{H_2}-C_2\sqrt{H_1}) \\
    \left. \frac{\partial}{\partial p_b} \pi_A \right|_{b\in\mcal{B}_2} &= 
        \frac{1}{||\bs{p}_1||^2_w\sqrt{H_2}}(C_1\sqrt{H_2}-C_2\sqrt{H_1}).
\end{aligned}
\end{equation}

We first consider critical points $\bs{p}$ strictly in the interior of $\Delta_n(P)$, and resolve the points on the boundary later.  One necessary condition for a critical point is that $\partial \pi_A/(\partial p_b) - \partial \pi_A/(\partial p_c) = 0$ for all $b\in\mcal{B}_1$ and $c\in\mcal{B}_2$.  Firstly, observe that $C_1 > \sqrt{H_1}$ and $\sqrt{H_2}>C_2$, and, thus, it must be that $C_1\sqrt{H_2}-C_2\sqrt{H_1}>0$.  We can thus divide the expression $\partial \pi_A/(\partial p_b) = \partial \pi_A/(\partial p_c)$ on both sides by $C_1\sqrt{H_2}-C_2\sqrt{H_1}$ and rearrange to obtain
\[ (p_b/w_b) (C_1\sqrt{H_2}-C_2\sqrt{H_1})
        = ||\bs{p}_1||^2_w (\sqrt{H_2} - \sqrt{H_1}) > 0. \]
Observe that the left-hand side is strictly greater than zero, and, thus, the right-hand side must be as well.  This immediately requires $\sqrt{H_2}-\sqrt{H_1}>0$, since $||\bs{p}_1||^2_w>0$.  Re-arranging the above expression, note that we also require
\[ \sqrt{H_1} [C_2 (p_b/w_b) - ||\bs{p}_1||^2_w] = \sqrt{H_2} [ C_1 (p_b/w_b) - ||\bs{p}_1||^2_w]. \]
Since we have just shown that $\sqrt{H_2}>\sqrt{H_1}$ must hold, it follows that each $b\in\mcal{B}_1$ satisfies either (i) $C_2 (p_b/w_b) - ||\bs{p}_1||^2_w<C_1 (p_b/w_b) - ||\bs{p}_1||^2_w<0$; or (ii) $C_2 (p_b/w_b) - ||\bs{p}_1||^2_w > C_1 (p_b/w_b) - ||\bs{p}_1||^2_w > 0$.  
Observe that $C_1 (p_b/w_b) > ||\bs{p}_1||^2_1$ must hold for $b' \in \arg\,\max_{b\in\mcal{B}_1} p_b/w_b$, and thus $b'$ must satisfy scenario (ii) and $C_2>C_1$ (or, equivalently, $qR_B-P>R_A$). This last inequality then implies that scenario (ii) must be satisfied for all $b\in\mcal{B}_1$.

We have shown that, in order for $\partial \pi_A/(\partial p_b)-\partial \pi_A/(\partial p_c)=0$ to hold for all $b\in\mcal{B}_1$ and $c\in\mcal{B}_2$, a critical point $\bs{p}$ must satisfy
\[ \frac{p_b}{w_b} = \bar p := \frac{\sqrt{H_2} - \sqrt{H_1}}{C_1\sqrt{H_2}-C_2\sqrt{H_1}}||\bs{p}_1||^2_w, \]

\noindent for each $b\in\mcal{B}_1$.  Expanding this expression, and solving for $\bar p$ explicitly, we obtain the following two possible (real) solutions for $\bar p$:
\[ \bar p = 0 \text{ or } \bar p = \frac{2(q R_B - P)(q R_B-R_A-P)}{W R_A}, \]
where we use $P_1=W_1 \bar p$, $P_2=P-P_1$, and $||\bs{p}_1||^2_w=W_1 (\bar p)^2$.  As $\bar p=0$ is inadmissible, we consider the latter expression for $\bar p$.  After inserting this expression for $\bar p$ into the right-hand side of \eqref{eq:case3_mostrestrictive}, where $\max_b\{p_b/w_b\}=\bar p$, we obtain
\begin{align*}
    & \frac{R_A}{2}\left[1 + \sqrt{1+\frac{2W}{R_A}\bar p} \right] \\
    =\> & \frac{R_A}{2} + qR_B-P - \frac{R_A}{2} = qR_B -P,
\end{align*}
which follows since we showed above that $qR_B-P>R_A$ must hold.  Thus, the only critical point sits at the boundary of the region where all battlefields are in Case 2, since decreasing $\bar p$ even slightly will satisfy the condition in \eqref{eq:case3_mostrestrictive}.  We can further verify that the payoff at this critical point is equal to the constant payoff in the region where all battlefields are in Case 2, but omit this for conciseness.

We conclude the proof by resolving the scenario where $\bs{p}$ lies on the boundaries of $\Delta_n(P)$.  Observe that the conditions on $q\kappa^*_B$ and $\kappa^*_A$ immediately imply that $p_b/w_b > p_c/w_c$ for any $b\in\mcal{B}_1$ and $c\in\mcal{B}_2$.  Thus, on the boundaries of $\Delta_n(P)$, it must either be that all battlefields with $p_b=0$ (and possibly more) are in Case 2, or that all battlefields in $\mcal{B}$ are in Case 1 (which is covered by \Cref{lem:local_maximizer}).  In the scenario where all battlefields with $p_b=0$ are in Case 2, note that the necessary condition $(\partial/(\partial p_i)-\partial/(\partial p_j)) \pi_A \geq 0$ for $i\in\arg\,\min_{b\in\mcal{B}_1} \{p_b/w_b\}$ and $j\in\arg\,\max_{b\in\mcal{B}_1} \{p_b/w_b\}$ only holds with equality if $p_b/w_b = P_1/W_1$ for all $b\in\mcal{B}_1$.  If $P_1/W_1 < \bar p$, then the inequality in \eqref{eq:case3_mostrestrictive} is satisfied implying that all battlefields are in Case 2, and Lemma \ref{lem:local_maximizer} shows that $\bs{p}^*=\bs{w} (P/W)$ must correspond with the same payoff to player $A$.  Otherwise, if $P_1/W_1 = \bar p$, then we showed above that the global maximum sits at the boundary where all battlefields are in Case 2 and $\bs{p}^*=\bs{w} (P/W)$ achieves the same payoff.  Finally, if $P_1/W_1 > \bar p$, then, from \eqref{eq:mixed_partials}, we know that $\partial \pi_A/(\partial p_b) - \partial \pi_A/(\partial p_c) < 0$ must hold for all $b\in\mcal{B}_1$ and $c\in\mcal{B}_2$, since the choice $p_b/w_b = \bar p$ satisfies $\partial \pi_A/(\partial p_b) - \partial \pi_A/(\partial p_c) = 0$, and $\partial \pi_A/(\partial p_b)$ is decreasing with respect to $p_b/w_b$ while $\partial \pi_A/(\partial p_c)$ is constant.  This violates a necessary condition for a critical point, and implies that $A$'s payoff is increasing in the direction of decreasing $p_b$ and increasing $p_c$, as expected.  
\qed





\smallskip\noindent-- {\bf Part 2:} 
In the proof of \Cref{lem:local_maximizer}, we provide the closed-form solutions to the system of equations \eqref{eq:SOE} for the symmetric case $\mcal{B}_1=\mcal{B}$ (resp. $\mcal{B}_2=\mcal{B}$) in \eqref{eq:case1solution_closedform} (resp. \eqref{eq:case2solution_closedform}).  In the following analysis, we derive conditions on the underlying parameters for which these closed-form solutions of \eqref{eq:SOE} exist and satisfy the corresponding constraints on $\kappa^*_A, q\kappa^*_B>0$, and find that these two cases encompass all possible game instances $\text{GL-P}(P,R_A,R_B,\bs{w})$.

\smallskip\noindent\underline{{\bf Case 1 ($\mcal{B}_1 = \mcal{B}$):}} Substituting $\bs{p}=(P/W)\cdot\bs{w}$ into \eqref{eq:case1solution_closedform} and simplifying, we obtain
\begin{equation}\label{eq:kappa_case2}
    \begin{aligned}
        q\kappa_B^* &= \frac{1}{W}\left[P+R_A + \sqrt{R_A(R_A + 2P)} \right] \\ 
        \kappa_A^* &= \frac{(P+R_A)q\kappa_B^* - P^2/W}{qR_B}.
    \end{aligned}
\end{equation}
Next, we verify that this solution satisfies the conditions $0 < q\kappa_B^* - P/W \leq \kappa_A^*$.

\noindent{$\bullet$ $q\kappa_B^* - P/W > 0$:} This holds by inspection.

\noindent{$\bullet$ $q\kappa_B^* - P/W \leq \kappa_A$:} We can write this condition as
\begin{equation}\label{eq:case2_cond3}
        qR_B - P \leq R_A + \frac{PR_A}{R_A + \sqrt{R_A(R_A + 2P)}}
\end{equation}
We note that whenever $qR_B - P < 0$, this condition is always satisfied. When $qR_B - P \geq 0$, this condition does not automatically hold, and an equivalent expression of \eqref{eq:case2_cond3} is given by
\begin{equation}
    R_A \geq \frac{2(qR_B-P)^2}{P + 2(qR_B-P)}.
\end{equation}
Observe that $R_A = \frac{2(qR_B-P)^2}{P + (qR_B-P)}$ satisfies \eqref{eq:case2_cond3} with equality, and is in fact the only real solution (one can reduce it to a cubic polynomial in $R_A$). 

When these conditions hold, the equilibrium payoff $\pi_A(P,R_A,R_B)$ (computed from Lemma \ref{lem:SOE}) is  given by the expression \eqref{eq:allcase1_payoff}.


\smallskip\noindent\underline{{\bf Case 2 ($\mcal{B}_2 = \mcal{B}$):}} Substituting $\bs{p}=(P/W)\cdot\bs{w}$ into \eqref{eq:case2solution_closedform} and simplifying, we obtain
\begin{equation}
    \kappa_A^* = \frac{2(qR_B - P)}{W} \text{ and } q\kappa_B^* = \frac{2(qR_B - P)^2}{WR_A}.
\end{equation}
The solution satisfies the condition $0 < \kappa_A^* < q\kappa_B^* - P/W$ if and only if $qR_B - P > 0$ and  $R_A > \frac{2(qR_B-P)^2}{P + (qR_B-P)}$. When this holds, the equilibrium payoff (from Lemma \ref{lem:SOE}) is $\pi_A(P,R_A,R_B) = W\cdot \frac{R_A}{2(qR_B-P)}$. \qed

\end{document}